\newtheorem{thm}{Theorem}[section]
\newtheorem{lemma}[thm]{Lemma}
\newtheorem{corollary}[thm]{Corollary}
\newtheorem{definition}[thm]{Definition}
\newcommand\card[1]{\left| {#1} \right|}
\newcommand\sett[2]{\left\{ \left. #1 \;\right\vert #2 \right\}}
\newcommand\set[1]{{\left\{ #1 \right\}}}
\newcommand\Prob[2]{{\Pr_{#1}\left[ {#2} \right]}}
\newcommand\Expect[2]{{\mathop{\mathbb{E}}_{#1}\left[ {#2} \right]}}
\newcommand\skipi{{\vskip 10pt}}
\newcommand\eps{\varepsilon}
\newcommand*\xor{\mathbin{\oplus}}
\renewcommand\geq{\geqslant}
\renewcommand\leq{\leqslant}
\newcommand{\rom}[1]{\uppercase\expandafter{\romannumeral #1\relax}}
\title{Parallel Repetition for the {\sf GHZ} Game: Exponential Decay}
\author{
Mark Braverman
\thanks{Department of Computer Science, Princeton University.
Research supported in part by the NSF Alan T. Waterman Award, Grant No. 1933331, a Packard Fellowship in Science and Engineering, and the Simons Collaboration on Algorithms and Geometry. }\and
Subhash Khot\thanks{Courant institute of Mathematical Sciences, New York University. Supported by
		the NSF Award CCF-1422159, NSF Award CCF-2130816, and the Simons Investigator Award.} \and
Dor Minzer\thanks{Department of Mathematics, Massachusetts Institute of Technology.  Supported by a Sloan Research Fellowship.}
}
\date{\vspace{-5ex}}
\begin{document}
\maketitle
\begin{abstract}
  We show that the value of the $n$-fold repeated {\sf GHZ} game is at most $2^{-\Omega(n)}$,
  improving upon the polynomial bound established by Holmgren and Raz. Our result is established
  via a reduction to approximate subgroup type questions from additive combinatorics.
\end{abstract}
\section{Introduction}
\subsection{Multi-player Parallel Repetition and the {\sf GHZ} Game}
The {\sf GHZ} game is a $3$-player game in which a verifier samples a triplet $(x,y,z)$ uniformly from $S = \sett{(x,y,z)}{x,y,z\in\{0,1\}, x\xor y\xor z = 0\pmod{2}}$,
then sends $x$ to Alice, $y$ to Bob and $z$ to Charlie. The verifier receives from each one of them a bit, $a$ from Alice, $b$ from Bob and $c$ from Charlie,
and accepts if and only if $a\xor b\xor c = x\lor y\lor z$. It is easy to prove that the value of the {\sf GHZ} game, ${\sf val}({\sf GHZ})$, defined as
the maximum acceptance probability of the verifier
over all strategies of the players, is $3/4$. The $n$-fold repeated {\sf GHZ} game is the game in which the verifier samples $(x_i,y_i,z_i)$ independently from
$S$ for $i=1,\ldots,n$, sends $\vec{x} = (x_1,\ldots,x_n)$, $\vec{y} = (y_1,\ldots,y_n)$ and $\vec{z} = (z_1,\ldots,z_n)$ to Alice, Bob and Charlie respectively,
receives vector answers $f(\vec{x}) = (f_1(\vec{x}),\ldots,f_n(\vec{x}))$, $g(\vec{y}) = (g_1(\vec{y}),\ldots,g_n(\vec{y}))$ and $h(\vec{z}) = (h_1(\vec{z}),\ldots,h_n(\vec{z}))$ and accepts
if and only if $f_i(\vec{x})\xor g_i(\vec{y})\xor h_i(\vec{z}) = x_i\lor y_i\lor z_i$ for all $i=1,\ldots,n$. What can one say about the value of the $n$-fold repeated game,
${\sf val}({\sf GHZ}^{\otimes n})$? As for lower bounds, it is clearly that case that ${\sf val}({\sf GHZ}^{\otimes n})\geq (3/4)^n$ and one expects that value
of the game to be exponentially decaying with $n$. Proving such upper bounds though is significantly
more challenging.

The {\sf GHZ} game is a prime example of a $3$-player game for which parallel repetition is not well understood. For $2$-player games, parallel
repetition theorems with an exponential decay have been known for a long time~\cite{Raz,Holenstein,Rao,BravermanGarg,DinurSteurer}, and in fact the state of the art parallel repetition theorems
for $2$-player games are essentially optimal. As for multi-player games, Verbitsky showed~\cite{Verbitsky} that the value of the $n$-fold repeated game approaches $0$,
however his argument uses the density Hales-Jewett theorem and hence gives a weak rate of decay (inverse Ackermann type bounds in $n$). More recently, researchers
have been trying to investigate multi-player games more systematically and managed to prove an exponential decay for a certain class of games known as expanding
games~\cite{DHVY}. This work also identified the {\sf GHZ} game as a bottleneck for current technique, saying that, in a sense, the {\sf GHZ} game exhibits the worst possible
correlations between questions for which existing information-theoretic techniques are incapable of handling.

A sequence of recent works~\cite{HR} (subsequently simplified by~\cite{GHMRZ}) managed to prove stronger parallel repetition theorems for the {\sf GHZ} game,
and indeed as suggested by~\cite{DHVY} this development led to a parallel repetition theorem for a certain class of $3$-player games~\cite{GHMRZ2,GMRZ},
namely for the class of games with binary questions. Quantitatively, they showed that ${\sf val}({\sf GHZ}^{\otimes n})\leq 1/n^{\Omega(1)}$, and
subsequently that for any $3$-player game $G$ with ${\sf val}(G)<1$ whose questions are binary, one has that ${\sf val}(G^{\otimes n})\leq 1/n^{\Omega(1)}$.
The techniques utilized by these works is a combination of information theoretic techniques (as used in the case of $2$-player games) and Fourier analytic techniques.

\subsection{Our Result}
The main result of this paper is an improved upper bound for the value of the $n$-fold repeated {\sf GHZ} game, which is exponential in $n$.
More precisely:
\begin{thm}\label{thm:main}
  There is $\eps>0$ such that for all $n$, ${\sf val}({\sf GHZ}^{\otimes n})\leq 2^{-\eps\cdot n}$.
\end{thm}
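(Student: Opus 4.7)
The plan is to proceed by contradiction: assume strategies $f,g,h$ achieving acceptance probability $\delta \geq 2^{-\eps n}$ in ${\sf GHZ}^{\otimes n}$ for a sufficiently small constant $\eps>0$, and derive a strategy for a single copy of {\sf GHZ} that beats $3/4$. The starting point is the information-theoretic framework of Raz/Holenstein as adapted to {\sf GHZ} by Holmgren--Raz and GHMRZ: condition on the event of winning in a random subset $C\subseteq [n]$, and identify a set $T\subseteq[n]$ of linear size on which the conditioned question distribution is close (in Hellinger distance) to i.i.d.\ samples from the {\sf GHZ} support, while the conditioned strategies still win on $T$ with noticeable probability. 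The obstacle that has blocked earlier work is that these techniques yield only coordinate-by-coordinate slack of order $1/\sqrt n$, which intrinsically limits them to polynomial decay. The idea is that to cross into exponential decay one must read the structure of $f,g,h$ on $T$ in bulk rather than coordinate by coordinate.

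The decisive step is to reduce this bulk structure to an \emph{approximate subgroup / approximate homomorphism} question over $\mathbb{F}_2^T$. The winning condition on $T$ demands
\[
f_T(\vec x)\xor g_T(\vec y)\xor h_T(\vec z) \;=\; \vec x\lor \vec y\lor \vec z
\]
whenever $\vec x+\vec y+\vec z=0$. The subgroup $H=\{\vec x+\vec y+\vec z=0\}\subseteq (\mathbb{F}_2^T)^3$ sits at the core of this equation, while the actual support of the questions is the subset $H'\subseteq H$ obtained by excluding coordinates on which $(x_i,y_i,z_i)=(1,1,1)$ — an approximate subgroup of $H$ with density bounded away from $0$ and $1$, and the $(\lor)$ term acts as a bounded correction. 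I would formalize the conclusion of Step~1 as saying that $(f_T,g_T,h_T)$ is an approximate polymorphism of this approximate subgroup, with parameters that depend on $\delta$ but not on $n$.

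The final stage is an inverse theorem, in the spirit of Bogolyubov--Ruzsa/Sanders but adapted to the three-variable $\mathbb{F}_2$-setting above: any triple of functions that approximately satisfies the homomorphism relation on $H'$ must coincide, on a positive-density affine subspace of coordinates, with genuine $\mathbb{F}_2$-affine maps. Linear structure on a positive fraction of coordinates can then be pulled back, via an averaging argument, to a single-copy classical strategy for {\sf GHZ} whose value strictly exceeds $3/4$, contradicting ${\sf val}({\sf GHZ})=3/4$. The main obstacle is this inverse step. Off-the-shelf inverse theorems for approximate homomorphisms over abelian groups lose polynomial or worse factors, whereas here one needs quantitative bounds robust enough to be useful at the $2^{-\eps n}$ scale; moreover, one must handle the fact that the domain is only an approximate subgroup of $(\mathbb{F}_2^T)^3$ and that the functional equation carries the nonlinear $(\lor)$ term rather than being purely homomorphic. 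Proving (or invoking) a suitable additive-combinatorial inverse theorem tailored to this regime is precisely where the reduction to approximate-subgroup questions flagged in the abstract should yield the exponential bound.
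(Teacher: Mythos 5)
Your proposal gets the high-level philosophy right---the abstract says the result reduces to approximate-subgroup questions---but it misses the single technical observation that makes the paper work, and it wraps the problem in heavy machinery the paper does not use.

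The paper's key move is a \emph{lift to $\mathbb{Z}_4$}, not an analysis over $\mathbb{F}_2$ with the $\lor$ treated as a bounded correction. Writing $F(x)=2f(x)-x$, $G(y)=2g(y)-y$, $H(z)=2h(z)-z$ as maps $\mathbb{Z}_2^n\to\mathbb{Z}_4^n$, the GHZ acceptance condition on every coordinate becomes \emph{exactly} $F(x)+G(y)+H(z)=0\pmod 4$, with no nonlinear error term at all (Lemma~\ref{lem:reduce_to_addive}). Your framing stays over $\mathbb{F}_2^T$, keeps $\vec x\lor\vec y\lor\vec z$ as a ``bounded correction,'' and then asks for an inverse theorem for approximate polymorphisms of an approximate subgroup. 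That is a much harder and murkier object, and you yourself flag it as the main obstacle without filling it in. Once you see the $\mathbb{Z}_4$ reformulation, no such bespoke inverse theorem is needed: two applications of Cauchy--Schwarz (Lemma~\ref{lem:test_to_additive}) give $\eta^4 N^3$ additive quadruples for $F$, and then Balog--Szemer\'edi--Gowers plus Pl\"unnecke (Gowers's Corollary 7.6, adapted in Lemma~\ref{lem:gowers}) produce a set $\mathcal{A}$ of density $\Omega(\eta^{1028})$ on which $F$ is a Freiman homomorphism of order $8$.

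Two further points where your plan diverges from---and would likely fail compared to---the actual proof. First, the entire Raz/Holenstein/Holmgren--Raz conditioning stage (random $C$, Hellinger-close coordinates $T$, etc.) is absent from the paper and unnecessary here; that machinery is exactly what caps earlier arguments at polynomial decay, and the paper sidesteps it by working globally with the strategies on all $n$ coordinates at once. Second, the paper never extracts a single-copy strategy beating $3/4$. Instead, a Freiman homomorphism of order $4$ from $\mathcal{A}\subseteq\mathbb{Z}_2^n$ to $\mathbb{Z}_4^n$ must take values in a single coset $s+\{0,2\}^n$ (Lemma~\ref{lem:frieman_prop}); but $F(x)\equiv -x\pmod 2$, so $F(x)\in s+\{0,2\}^n$ pins down $x\equiv -s\pmod 2$, forcing $|\mathcal{A}|\leq 1$ and hence $\Omega(\eta^{1028})\leq 2^{-n}$. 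The bound falls out directly, with no averaging back to a one-copy game. Your final step (``pull back via averaging to a single-copy strategy whose value exceeds $3/4$'') is not spelled out and would need a separate argument that linear structure on a density-$\Omega(1)$ set of coordinates actually yields such a strategy---a nontrivial claim you do not justify.

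In short: the additive-combinatorial instinct is correct, but without the $\mathbb{Z}_4$ lift the ``approximate homomorphism'' you set up is the wrong one, the information-theoretic front end is dead weight, and the closing single-copy extraction is a gap.
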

Such bounds cannot be achieved by the methods of \cite{HR,GHMRZ,GHMRZ2,GMRZ}, and we hope that the observations made herein would be useful towards
getting better parallel repetition theorems for more general classes of $3$-player games.

\subsection{Proof Idea}
Our proof of Theorem~\ref{thm:main} follows by reducing it to approximate sub-group type questions from additive combinatorics,
and our argument uses results of Gowers~\cite{Gowers}. Similar ideas have been
also explored in the TCS community (for example, by Samorodnitsky~\cite{Samorodnitsky}).
\skipi
Suppose $f\colon \{0,1\}^n \to\{0,1\}^n$, $g\colon \{0,1\}^n \to\{0,1\}^n$ and $h\colon \{0,1\}^n\to\{0,1\}^n$ represent the strategies of
Alice, Bob and Charlie respectively, and denote their success probability by $\eta$. Thus, we have that
\begin{equation}\label{eq:into}
\Prob{(x,y,z)\in S^n}{ f(x) \oplus g(y) \oplus h(z) = x\lor y \lor z}\geq \eta,
\end{equation}
where the operations are coordinate-wise. Using Cauchy-Schwarz it follows that if we sample $x,y,z$ and $u,v,w$ conditioned on $x\lor y\lor z = u\lor v\lor w$,
then $f(x) \oplus g(y) \oplus h(z) = f(u) \oplus g(v) \oplus h(w)$ with probability at least $\eta^2$, hence
$f(x) \oplus f(u) \oplus g(y) \oplus g(v) \oplus h(z) \oplus h(w) = 0$. What functions $f,g,h$ can satisfy this?
We draw an intuition from~\cite{BKM22}, that suggested that such advantage can only be gained from \emph{linear embeddings}. In this respect,
we are looking at the predicate $P\colon \Sigma^3\to \{0,1\}$ with alphabet $\Sigma = \{0,1\}^2$
defined as $P((x,u), (y,v), (z,w)) = 1$ if $x\lor y\lor z = u\lor v\lor w$, $x+y+z = 0$ and $u+v+w = 0$. A linear embedding is an Abelian group $(A,+)$
and a collection of maps $\phi\colon \Sigma \to A$, $\gamma\colon \Sigma\to A$ and $\delta \colon\Sigma\to A$ not all constant such that $\phi(x,u) + \gamma(y,v) + \delta(z,w) = 0$.
There are $2$ trivial linear embeddings into $(\mathbb{Z}_2,+)$: the projection onto the first coordinate as well as the projection onto the second coordinate.
Thus, one is tempted to guess that in the above scenario, the functions $f,g$ and $h$ must use these linear embeddings and thus be correlated with linear functions
over $\mathbb{Z}_2$.
Alas, it turns out that there is yet, another embedding which is less obvious: taking $(A,+) = (\mathbb{Z}_4,+)$, $\phi(x,u) = x+u$, $\gamma(y,v) = y+v$
and $\delta(z,w) = z+w$. This motivates us to look at the original problem and see if we can already see $(\mathbb{Z}_4,+)$ structure there.

\paragraph{Approximate Homomorphisms.}
For $(x,y,z)\in S$, if $x\lor y\lor z = 1$, then exactly
two of the variables are $1$; if $x\lor y\lor z = 0$, then all of $x,y,z$ are $0$. Thus, one can see that the check we are making is equivalent to checking that
$2f(x) + 2g(y) + 2h(z) = x+y+z\pmod{4}$. Indeed, on a given coordinate $i$, if $(x_i\lor y_i\lor z_i)$ is $1$, then $x_i+y_i+z_i = 2$ and the answers need to
satisfy that $f(x)_i + g(y)_i + h(z)_i = 1\pmod{2}$ which implies $2f(x)_i + 2g(y)_i + 2h(z)_i = 2\pmod{4}$. Similarly, if $(x_i\lor y_i\lor z_i) = 0$ then
$x_i + y_i + z_i = 0$ and the constraint says that we want $f(x)_i + g(y)_i + h(z)_i = 0\pmod{2}$ which implies that $2f(x)_i + 2g(y)_i + 2h(z)_i = 0\pmod{4}$.
Thus, the {\sf GHZ} test can be thought of as a system of equations modulo $4$, as suggested by the above intuition.
More precisely, defining $F\colon \{0,1\}^n\to\mathbb{Z}_4^n$ by $F(x)_i = 2f(x)_i - x_i$
and similarly $G, H\colon \{0,1\}^n\to\mathbb{Z}_4^n$ by $G(y)_i = 2g(y)_i - y_i$
and $H(z)_i = 2h(z)_i - z_i$, we have the following lemma:
\begin{lemma}\label{lem:reduce_to_addive}
  For each $x,y,z\in S^n$, $F(x) + G(y) + H(z) = 0\pmod{4}$ if and only if
  $f(x)_i \oplus g(y)_i \oplus h(z)_i = x_i\lor y_i \lor z_i$ for all $i=1,\ldots,n$.
  Consequently,
  \[
  \Prob{(x,y,z)\in S^n}{ F(x) + G(y) + H(z) = 0\pmod{4}}\geq \eta.
  \]
\end{lemma}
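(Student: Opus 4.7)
The plan is to verify the claim coordinate-wise; the lemma is essentially a restatement of the {\sf GHZ} predicate in the modular-arithmetic language motivated above, so the proof reduces to checking two small cases. Fix a coordinate $i$ and expand, using the definitions $F(x)_i = 2f(x)_i - x_i$, $G(y)_i = 2g(y)_i - y_i$, $H(z)_i = 2h(z)_i - z_i$ (as integers, then reduced mod $4$):
\[
F(x)_i + G(y)_i + H(z)_i \equiv 2\bigl(f(x)_i + g(y)_i + h(z)_i\bigr) - (x_i+y_i+z_i) \pmod{4}.
\]

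Next I would use the constraint $(x_i,y_i,z_i)\in S$ to observe that $x_i\oplus y_i\oplus z_i = 0$, so either all three bits vanish (in which case $x_i+y_i+z_i = 0$ and $x_i\lor y_i\lor z_i = 0$), or exactly two of them are $1$ (in which case $x_i+y_i+z_i = 2$ and $x_i\lor y_i\lor z_i = 1$). In both cases one has the uniform identity $x_i+y_i+z_i = 2(x_i\lor y_i\lor z_i)$ as integers, so the displayed congruence becomes
\[
F(x)_i + G(y)_i + H(z)_i \equiv 2\bigl(f(x)_i + g(y)_i + h(z)_i - (x_i\lor y_i\lor z_i)\bigr) \pmod{4}.
\]
Since multiplication by $2$ is injective from $\mathbb{Z}_2$ into $\mathbb{Z}_4$, the right-hand side is $0 \pmod{4}$ if and only if $f(x)_i + g(y)_i + h(z)_i \equiv x_i\lor y_i\lor z_i \pmod{2}$, which is exactly $f(x)_i \oplus g(y)_i \oplus h(z)_i = x_i\lor y_i\lor z_i$. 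Summing the coordinate-wise congruences in $\mathbb{Z}_4^n$ gives the vector equation $F(x)+G(y)+H(z) = 0 \pmod{4}$ iff the {\sf GHZ} constraint is satisfied in every coordinate, proving the ``if and only if'' claim.

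The ``consequently'' clause is then immediate: the event $\{F(x)+G(y)+H(z)=0\pmod 4\}$ under $(x,y,z)\in S^n$ coincides exactly with the event that all $n$ {\sf GHZ} constraints are simultaneously satisfied, whose probability is at least $\eta$ by the hypothesis~\eqref{eq:into}. There is no real obstacle here --- the content of the lemma is entirely encoded in the identity $x_i+y_i+z_i = 2(x_i\lor y_i\lor z_i)$ valid on $S$, and the main point of writing it out is to fix the notation $F,G,H\colon\{0,1\}^n\to\mathbb{Z}_4^n$ that will carry the rest of the argument into the additive-combinatorics setting.
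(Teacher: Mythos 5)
Your proof is correct, and it reorganizes the paper's argument in a cleaner way. The paper verifies the claim by a case analysis on the values of $(x_1,y_1,z_1)$ and then sub-cases on the answer bits $f,g,h$; as written, that proof only explicitly traces the forward implication (the {\sf GHZ} constraint implies $F+G+H=0$), leaving the reader to check that the cases are exhaustive enough to give the converse. You instead isolate the identity $x_i+y_i+z_i = 2(x_i\lor y_i\lor z_i)$ valid on $S$, rewrite the coordinate sum as $2\bigl(f(x)_i+g(y)_i+h(z)_i - (x_i\lor y_i\lor z_i)\bigr)\pmod 4$, and invoke the fact that $2a\equiv 0\pmod 4$ iff $a\equiv 0\pmod 2$. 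This makes the ``if and only if'' symmetric and transparent, and it matches the paper's own motivating discussion (the $\mathbb{Z}_4$ intuition preceding the lemma) better than the paper's own proof does. The underlying content is the same small finite check, so I would not call this a different route, but your presentation is tighter and more complete.
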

\begin{proof}
  Without loss of generality we focus on the first coordinate. If $(x_1,y_1,z_1) = (0,0,0)$, then by~\eqref{eq:into} we get that $f(x)_1 \oplus g(y)_1 \oplus h(z)_1 = 0$, hence
  either all of them are $0$ or exactly two of them are $1$, and in any case $2f(x)_1 + 2g(y)_1 + 2h(z)_1 = 0\pmod{4}$.
  Otherwise, without loss of generality $(x_1,y_1,z_1) = (1,1,0)$, and then by~\eqref{eq:into} we get $f(x)_1 \oplus g(y)_1 \oplus h(z)_1 = 1$,
  and there are two cases. If $f(x)_1 = g(y)_1 = h(z)_1 = 1$, then we get that
  $F(x)_1 + G(y)_1 + H(z)_1 = 2-1+2-1+2+0 = 0\pmod{4}$. Else, exactly one of them is $1$, say $f(x)_1 = 1$ and $g(y)_1 = h(z)_1 = 0$, and
  then $F(x)_1 + G(y)_1 + H(z)_1 = 2-1+0-1+0-0 = 0$.
\end{proof}
In words, Lemma~\ref{lem:reduce_to_addive} says that $F, G, H$ form an approximate ``cross homomorphism'' from $\mathbb{Z}_2^n$ to
$\mathbb{Z}_4^n$. Once we have made this observation, the proof is concluded by a routine application of powerful tools from additive combinatorics.

More specifically, we appeal to results of Gowers and show for any $F$ that satisfies Lemma~\ref{lem:reduce_to_addive} (for some $G$ and $H$)
must exhibit some weak linear behaviour. Specifically, we show that for such $F$ there is a shift $s\in\mathbb{Z}_4^n$ such that $F(x) \in s+\{0,2\}^n$
for at least $\eta'=\Omega(\eta^{10^4})$ fraction of inputs. On the other hand, on such points $x$ we get that $2f(x) - x = F(x) = s+L(x)$ for some $L(x)\in \{0,2\}^n$,
and noting that this must hold modulo $2$ we get that there can only be one such point, $x = -s\pmod{2}$. Thus, $\eta' \leq 2^{-n}$, giving an exponential bound
on $\eta$.

\section{Proof of Theorem~\ref{thm:main}}
\subsection{From Testing to Additive Quadruples}
We need the following definition:
\begin{definition}
  Let $(A,+), (B,+)$ be Abelian groups, and let $F\colon A^n\to B^n$. We say $(x,y,u,v)\in A^n\times A^n\times A^n\times A^n$
  is an additive quadruple if $x+y = u+v$ and $F(x) + F(y) = F(u) + F(v)$.
\end{definition}
In our application, we will always have $A = \{0,1\}$. For convenience we denote $N = 2^{n}$. Thus, it is clear that the number of additive quadruples
is always at most $N^{3}$ (as this is the number of solutions to $x+y = u+v$). The following lemma asserts that if
$F, G, H\colon \{0,1\}^n\to B^n$ are functions such that $F(x) + G(y) + H(z) = 0$ for at least $\eta$ of the triples $x,y,z$ satisfying
$x\oplus y = z$ (such as the one given in Lemma~\ref{lem:reduce_to_addive}), then each one of the functions $F, G$ and $H$ has a substaintial
amount of additive quadruples.
\begin{lemma}\label{lem:test_to_additive}
  Suppose that $F,G,H\colon \{0,1\}^n\to B^n$ satisfy that
  \[
  \Prob{(x,y,z)\in S^n}{F(x)+G(y)+H(z) = 0}\geq \eta.
  \]
  Then $F$ has at least $\eta^4 N^3$ additive quadruples.
\end{lemma}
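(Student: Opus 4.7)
The plan is to construct, via two applications of Cauchy-Schwarz, a fourfold product of indicators whose ``on'' tuples witness additive quadruples of $F$. Let $W(x,y) = \mathbf{1}[F(x) + G(y) + H(x+y) = 0]$; since sampling $(x,y,z)\in S^n$ is equivalent to sampling $x,y \in \{0,1\}^n$ uniformly and setting $z = x+y$ in $\mathbb{Z}_2^n$, the hypothesis reads $\E_{x,y}[W(x,y)] \geq \eta$.

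First I would define
\[
T \;:=\; \E_{x_1, x_2, s, y}\bigl[W(x_1, y)\, W(x_1+s, y+s)\, W(x_2, y+s)\, W(x_2+s, y)\bigr]
\]
and verify that whenever all four indicators equal $1$, the tuple $(x_1, x_2, x_1+s, x_2+s)$ is an additive quadruple of $F$. Taking the alternating combination of the four underlying constraints, the $G$-values are $G(y), G(y+s), G(y+s), G(y)$ and cancel; using $(x_1+s)+(y+s) = x_1+y$ and $(x_2+s)+y = x_2+(y+s)$ in $\mathbb{Z}_2^n$, the $H$-arguments collapse into two matching pairs and likewise cancel; what remains is $F(x_1)+F(x_2) = F(x_1+s)+F(x_2+s)$, together with the automatic identity $x_1+x_2 = (x_1+s)+(x_2+s)$. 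This four-factor arrangement is the creative step and the main obstacle: the pattern of $y$ versus $y+s$ in the four slots is engineered precisely so that both the $G$- and the $H$-contributions can be eliminated simultaneously, and finding this configuration is the only non-routine design choice in the proof.

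Next I would lower bound $T$ by $\eta^4$. Grouping $T$ as $\E_{s,y}[\phi(s,y)\,\psi(s,y)]$ with $\phi(s,y) = \E_x[W(x,y)\,W(x+s,y+s)]$ and $\psi(s,y) = \E_x[W(x,y+s)\,W(x+s,y)]$, the substitution $x \mapsto x+s$ yields $\psi = \phi$, so Cauchy-Schwarz gives $T = \E_{s,y}[\phi(s,y)^2] \geq (\E_{s,y}[\phi(s,y)])^2$. Setting $(x',y') = (x+s, y+s)$, the inner expectation becomes $\E[W(x,y)W(x',y')]$ over independent uniform pairs with $x+y = x'+y'$. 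Conditioning on the common sum $z$, this equals $\E_z[p(z)^2]$ where $p(z) = \Pr_{x,y: x+y=z}[W(x,y)=1]$; a second Cauchy-Schwarz gives $\E_z[p(z)^2] \geq (\E_z[p(z)])^2 = (\E[W])^2 \geq \eta^2$. Chaining the two inequalities, $T \geq \eta^4$.

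Finally, the number of tuples $(x_1, x_2, s, y) \in (\{0,1\}^n)^4$ with all four $W$'s equal to $1$ is exactly $T \cdot N^4$, and the projection $(x_1, x_2, s, y) \mapsto (x_1, x_2, x_1+s, x_2+s)$ has fibers of size exactly $N$ (parameterized by $y$; the value of $s$ is forced by the target quadruple, consistently because $a+b = c+d$). Hence $F$ admits at least $T \cdot N^3 \geq \eta^4 N^3$ additive quadruples, as claimed.
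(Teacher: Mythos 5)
Your proof is correct and uses essentially the same mechanism as the paper: two applications of Cauchy--Schwarz that together produce a fourth power of $\eta$, applied to a four-term ``rectangle'' of indicators arranged so that the $G$- and $H$-contributions cancel and leave an additive quadruple for $F$. The paper organizes the same computation as a chain of inequalities descending from $\eta$ (first dropping $G$, then $H$, from the indicator and finishing with a change of variables onto the quadruple density), whereas you write the four-indicator average $T$ explicitly, lower-bound it via the $\phi=\psi$ symmetrization and a conditioning on the common sum, and finish with a fiber count; the two presentations are equivalent and give the same bound.
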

\begin{proof}
  By the premise and Cauchy-Schwarz
  \begin{align*}
  \eta^2
  = \Expect{y}{\Expect{x}{1_{G(y) = -F(x)-H(x\oplus y)}}}^2
  &\leq
  \Expect{y}{\Expect{x}{1_{G(y) = -F(x)-H(x\oplus y)}}^2}\\
  &= \Expect{y}{\Expect{x,x'}{1_{G(y) = -F(x)-H(x\oplus y)}1_{G(y) = -F(x')-H(x'\oplus y)}}}\\
  &\leq \Expect{x,x',y}{1_{F(x)-F(x') = H(x'\oplus y) - H(x\oplus y)}}.
  \end{align*}
  Making change of variables, we get that $ \eta^2\leq  \Expect{x,u,u'}{1_{F(x)-F(x\oplus u\oplus u') = H(u') - H(u)}}$. Squaring and using Cauchy-Schwarz again
  we get that
  \begin{align*}
  \eta^4\leq
  \Expect{x,u,u'}{1_{F(x)-F(x\oplus u\oplus u') = H(u') - H(u)}}^2
  &\leq \Expect{u,u'}{\Expect{x}{1_{F(x)-F(x\oplus u\oplus u') = H(u') - H(u)}}^2}\\
  &\leq \Expect{u,u'}{\Expect{x,x'}{1_{F(x)-F(x\oplus u\oplus u') = F(x') - F(x'\oplus u\oplus u')}}},
  \end{align*}
  which by another change of variables is equal to $\Expect{x,y,u,v: x+y = u+v}{1_{F(x) + F(y) = F(u) + F(v)}}$,
  and the claim is proved.
\end{proof}

\subsection{From Additive Quadruples to Linear Structure}
We intend to use Lemma~\ref{lem:test_to_additive} to conclude a structural result for $F$, and
towards this end we show that a function that has many additive quadruples must exhibit some linear structure.
The content of this section is a straight-forward combination of well-known results in
additive combinatorics, and we include it here for the sake of completeness.
We need the notions of Freiman homomorphism, sum-sets and a result of Gowers~\cite{Gowers}. We begin with two definitions:
\begin{definition}
  Let $(A,+)$ and $(B,+)$ be Abelian groups, let $n\in\mathbb{N}$ and let $\mathcal{A}\subseteq A^n$.
  A function $\phi\colon \mathcal{A}\to B^n$ is called a Freiman homorphism of order $k$ if for
  all $a_1,\ldots,a_k\in\mathcal{A}$ and $b_1,\ldots,b_k\in \mathcal{A}$ such that
  $a_1+\ldots+a_k = b_1+\ldots+b_k$ it holds that
  \[
        \phi(a_1)+\ldots+\phi(a_k) = \phi(b_1)+\ldots+\phi(b_k).
  \]
\end{definition}
\begin{definition}
  Let $(A,+)$ be an Abelian group, let $n\in\mathbb{N}$ and let $\mathcal{A},\mathcal{B}\subseteq A^n$. We define
  \[
  \mathcal{A} + \mathcal{B} = \sett{a+b}{a\in \mathcal{A}, b\in \mathcal{B}}.
  \]
  If $\mathcal{A} = \mathcal{B}$, we denote the sum-set $\mathcal{A} + \mathcal{B}$ more succinctly as $2\mathcal{A}$, and more generally
  $k\mathcal{A}$ denotes the $k$-fold sum set of $\mathcal{A}$.
\end{definition}

We need a result of Gowers~\cite{Gowers} asserting that a function $F$ with many additive quadruples can be restricted to a relatively large set and yield
a Freiman homomorphism. Gowers states and proves the statement for $\mathbb{Z}_N$, and we adapt his proof for our setting.
For the proof we need two notable results in additive combinatorics.
The first of which is the Balog-Szemer\'edi-Gowers theorem, and we use the version from~\cite{Schoen}:
\begin{thm}[Balog-Szemer\'edi-Gowers]\label{thm:BSG}
  Let $G$ be an Abelian group, and suppose that $\Gamma\subseteq G$ contains at least $\xi \card{\Gamma}^3$ additive quadruples, that is,
  $\card{\sett{(x,y,z,w)\in \Gamma^4}{x+y = z+w}}\geq \xi\card{\Gamma}^3$. Then there exists $\Gamma'\subseteq \Gamma$ of size at least
  $\Omega(\xi\card{\Gamma})$ such that $\card{\Gamma'-\Gamma'}\leq O(\xi^{-4}\card{\Gamma'})$.
\end{thm}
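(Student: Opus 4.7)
The plan is to follow the classical two-step proof of Balog--Szemer\'edi--Gowers: first a graph-theoretic step that extracts combinatorial structure from the additive energy hypothesis, and then a Ruzsa-type covering step that converts this structure into a bound on the difference set.

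In the first step, I would write the additive energy as $E(\Gamma) = \sum_{s} r(s)^2 \geq \xi |\Gamma|^3$, where $r(s) = |\{(a,b) \in \Gamma^2 : a + b = s\}|$, and apply Cauchy--Schwarz together with a dyadic pigeonholing to isolate a set $S$ of \emph{popular sums} on which $r(s) \gtrsim \xi |\Gamma|$ and $\sum_{s \in S} r(s) \gtrsim \xi |\Gamma|^2$. Form the bipartite graph $H \subseteq \Gamma \times \Gamma$ in which $a \sim b$ iff $a + b \in S$; then $H$ has $\Omega(\xi |\Gamma|^2)$ edges. The core combinatorial fact (the \emph{BSG graph lemma}) says that any sufficiently dense bipartite graph on $\Gamma \times \Gamma$ contains subsets $A', B' \subseteq \Gamma$ with $|A'|, |B'| = \Omega(\xi |\Gamma|)$ such that every pair $(a,b) \in A' \times B'$ is joined by at least $\Omega(\xi^{O(1)} |\Gamma|^2)$ walks of length three in $H$. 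Each such walk is a pair $(x_1, x_2) \in \Gamma^2$ with $a + x_1, \, x_1 + x_2, \, x_2 + b$ all in $S$, so that
\[
a + b \;=\; (a + x_1) - (x_1 + x_2) + (x_2 + b) \;\in\; S - S + S.
\]
Since each element of $S$ has $\gtrsim \xi |\Gamma|$ representations as a pairwise sum from $\Gamma$, a double-counting argument then bounds $|A' + B'| \lesssim \xi^{-O(1)} |\Gamma|$.

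In the second step, I would take $\Gamma'$ to be one of $A'$, $B'$ (or pass to a common subset by a symmetrization/averaging argument) so that $|\Gamma' + \Gamma'| \lesssim \xi^{-O(1)} |\Gamma'|$, and then invoke the Ruzsa triangle inequality together with the Pl\"unnecke--Ruzsa inequalities to transfer the sumset bound to the difference-set bound $|\Gamma' - \Gamma'| \lesssim \xi^{-4} |\Gamma'|$ claimed in the statement.

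The main obstacle will be bookkeeping the exponents of $\xi$ throughout. A naive execution of the BSG graph lemma combined with a black-box application of Pl\"unnecke--Ruzsa produces a substantially weaker polynomial dependence on $\xi$, and recovering the sharp $\xi^{-4}$ bound due to Schoen requires a more refined double counting inside the graph lemma, as well as a careful choice of how one symmetrizes between $A'$ and $B'$ before passing through the Ruzsa inequalities. This exponent-tracking is the step I expect to be the most delicate.
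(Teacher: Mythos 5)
The paper does not prove this statement: Theorem~\ref{thm:BSG} is quoted verbatim as a black-box result from Schoen, so there is no internal proof against which to compare your argument. Taking your sketch on its own terms, it is a correct outline of the classical Gowers-style proof of BSG --- popular sums, the bipartite BSG graph lemma producing $A',B'$ with many length-three walks between every pair, a representation-counting bound on $|A'+B'|$, and a Ruzsa/Pl\"unnecke step to pass to $|\Gamma'-\Gamma'|$ --- and each step you describe is sound at the level of detail given.

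The one real gap is the one you yourself flag: the classical route, even carefully executed, does not reach the $\xi^{-4}$ exponent in the statement. Gowers' original argument and its textbook descendants give a substantially larger power of $\xi^{-1}$, and the improvement to $\xi^{-4}$ is a genuine refinement due to Schoen that reorganizes the double counting (and avoids the most wasteful black-box uses of Pl\"unnecke) rather than merely tracking the classical exponents more carefully. So ``careful bookkeeping'' will not close the gap; one has to actually follow Schoen's argument. That said, for the application in this paper the precise exponent is harmless: any fixed polynomial loss $\xi^{-O(1)}$ (with $|\Gamma'|\geq\Omega(\xi^{O(1)}|\Gamma|)$) feeds through Lemma~\ref{lem:gowers} and Corollary~\ref{cor:final} and changes only the numerical constant $1028$ in the final bound. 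So if you need the theorem as an ingredient, citing Schoen as the paper does is the right move, whereas proving the $\xi^{-4}$ form requires more than a polished version of the classical proof.
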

The second result we need is Pl\"{u}nnecke's inequality~\cite{Plu,RuszaPlu} (see also~\cite{petridis2012new}):
\begin{thm}[Pl\"{u}nnecke's inequality]\label{thm:plu}
  Let $G$ be an Abelian group, and suppose that $\Gamma\subseteq G$ has $\card{\Gamma-\Gamma}\leq C\card{\Gamma}$. Then
  $\card{m\Gamma - r\Gamma}\leq C^{m+r}\card{\Gamma}$.
\end{thm}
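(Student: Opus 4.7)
The plan is to follow the modern proof of Pl\"unnecke's inequality due to Petridis~\cite{petridis2012new}, which reduces the theorem to a short magnification lemma together with Ruzsa's triangle inequality.

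First I would establish Petridis's magnification lemma in the form tailored to difference sets. Let $X \subseteq \Gamma$ be a nonempty subset that minimizes the ratio $K' := \card{X - \Gamma}/\card{X}$. Since the choice $X = \Gamma$ is admissible, $K' \leq C$. The magnification lemma states that for every finite set $Z \subseteq G$,
\[
\card{X - \Gamma + Z} \leq K' \cdot \card{X + Z}.
\]
Its proof is a short argument by induction on $\card{Z}$, comparing $X - \Gamma + (Z \cup \{z\})$ with $X - \Gamma + Z$ for a new element $z$: the ``new'' contributions from $z$ are controlled by $\card{X' - \Gamma}$ for a suitable $X' \subseteq X$, and the minimality of $K'$ forces $\card{X' - \Gamma} \geq K'\card{X'}$, from which the required inequality follows.

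Iterating the magnification lemma with $Z = -k\Gamma$ yields $\card{X - (k+1)\Gamma} \leq K' \cdot \card{X - k\Gamma}$, and hence by induction $\card{X - n\Gamma} \leq (K')^{n} \cdot \card{X}$ for every $n \geq 0$. Finally I would invoke Ruzsa's triangle inequality, $\card{A}\cdot\card{B-C} \leq \card{A-B}\cdot\card{A-C}$, with $A = X$, $B = m\Gamma$, $C = r\Gamma$, obtaining
\[
\card{X}\cdot\card{m\Gamma - r\Gamma} \leq \card{X - m\Gamma} \cdot \card{X - r\Gamma} \leq (K')^{m+r}\cdot\card{X}^{2}.
\]
Dividing by $\card{X}$ and using $K' \leq C$ together with $\card{X} \leq \card{\Gamma}$ gives $\card{m\Gamma - r\Gamma} \leq C^{m+r}\card{\Gamma}$, as required.

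The main obstacle is Petridis's magnification lemma itself: it is the only genuinely nontrivial step, and obtaining exactly the constant $K'$ (rather than something slightly larger) is essential for the iteration to avoid accumulating losses that would ruin the $C^{m+r}$ bound. Once that lemma is in hand, the induction and the triangle-inequality step are purely formal bookkeeping, and the only minor organizational point is to apply Petridis's lemma with $B = -\Gamma$ so that iteration produces difference sets directly, rather than sum sets that would later need conversion via another Ruzsa inequality.
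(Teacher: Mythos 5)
The paper does not prove Pl\"unnecke's inequality; it imports Theorem~\ref{thm:plu} as a black box, citing \cite{Plu,RuszaPlu} and \cite{petridis2012new}. Your proposal correctly reconstructs the proof from the last of these references: the minimizer-based magnification lemma of Petridis (take $X\subseteq\Gamma$ minimizing $\card{X-\Gamma}/\card{X}$, apply it with $B=-\Gamma$ so that iteration stays within difference sets), giving $\card{X-n\Gamma}\leq (K')^n\card{X}$, and then Ruzsa's triangle inequality $\card{A}\card{B-C}\leq\card{A-B}\card{A-C}$ with $A=X$, $B=m\Gamma$, $C=r\Gamma$. The bookkeeping checks out: $K'\leq C$ since $\Gamma$ itself is an admissible choice of $X$, and $\card{X}\leq\card{\Gamma}$ converts the final bound to $C^{m+r}\card{\Gamma}$. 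This is a valid, self-contained proof of the stated theorem along the route the paper itself points to.
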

\begin{lemma}[Corollary 7.6 in~\cite{Gowers}]\label{lem:gowers}
  Let $n\in\mathbb{N}$, and suppose that a function $\phi\colon \mathbb{Z}_2^n\to \mathbb{Z}_4^n$ has at least $\xi\card{\mathbb{Z}_2^{n}}^3$ additive quadruples.
  Then there exists $\mathcal{A}\subseteq \mathbb{Z}_2^n$ such that $\phi|_{\mathcal{A}}$ is a Freiman homomorphism
  of order $8$ and $\card{\mathcal{A}}\geq \Omega(\xi^{257}\card{\mathbb{Z}_2^n})$.
\end{lemma}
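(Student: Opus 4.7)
The plan is to lift $\phi$ to its graph and then run Ruzsa calculus, mirroring Gowers's original argument but in our product setting. Set $G = \mathbb{Z}_2^n \times \mathbb{Z}_4^n$ and consider $\Gamma = \set{(x, \phi(x)) : x \in \mathbb{Z}_2^n} \subseteq G$, so $\card{\Gamma} = 2^n$. The key observation is that additive quadruples of $\Gamma$ inside the product group $G$ correspond exactly to additive quadruples of $\phi$, since $(x,\phi(x)) + (y,\phi(y)) = (u,\phi(u)) + (v,\phi(v))$ holds in $G$ iff $x+y = u+v$ in $\mathbb{Z}_2^n$ and $\phi(x)+\phi(y) = \phi(u)+\phi(v)$ in $\mathbb{Z}_4^n$. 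Hence $\Gamma$ has at least $\xi \card{\Gamma}^3$ additive quadruples in $G$.

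Next, I would apply the Balog--Szemer\'edi--Gowers theorem (Theorem~\ref{thm:BSG}) in $G$ to extract a subset $\Gamma_1 \subseteq \Gamma$ with $\card{\Gamma_1} \geq \Omega(\xi)\card{\Gamma}$ and $\card{\Gamma_1 - \Gamma_1} \leq K \card{\Gamma_1}$ for some $K = O(\xi^{-4})$. Pl\"unnecke's inequality (Theorem~\ref{thm:plu}) then yields $\card{8\Gamma_1 - 8\Gamma_1} \leq K^{16}\card{\Gamma_1}$. Because $\Gamma$ is the graph of a function, the projection $\pi \colon G \to \mathbb{Z}_2^n$ onto the first coordinate is automatically injective on $\Gamma_1$, so $\mathcal{A}_1 := \pi(\Gamma_1)$ satisfies $\card{\mathcal{A}_1} = \card{\Gamma_1} \geq \Omega(\xi)\card{\mathbb{Z}_2^n}$.

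The remaining and main step is a Freiman-isomorphism extraction: I want to pass to a further subset $\Gamma_2 \subseteq \Gamma_1$ of density polynomial in $\xi$ such that $\pi|_{\Gamma_2}$ is a Freiman isomorphism of order $8$ onto $\mathcal{A} := \pi(\Gamma_2)$, which (since $\pi$ is already a homomorphism) is equivalent to the single requirement $(8\Gamma_2 - 8\Gamma_2) \cap \ker \pi = \set{(0,0)}$. Unpacking that equality on $\Gamma_2 = \set{(a,\phi(a)): a \in \mathcal{A}}$ is exactly the statement that $\phi|_{\mathcal{A}}$ is a Freiman homomorphism of order $8$, giving the conclusion. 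The main obstacle lies precisely here: Pl\"unnecke only bounds the ``obstruction fiber'' $(8\Gamma_1 - 8\Gamma_1) \cap \ker\pi$ by $K^{16}$, not that it is trivial, and every nonzero element of this fiber is a certificate that the Freiman-$8$ condition fails somewhere on $\mathcal{A}_1$. To eliminate all such obstructions simultaneously I would run an iterative or probabilistic refinement that at each stage discards elements of $\Gamma_1$ participating in some obstruction, controlling the compounded doubling via further applications of Pl\"unnecke. Bookkeeping the losses from BSG (a factor $\xi^{-4}$ in the doubling), Pl\"unnecke (the $K^{16}$ blow-up in the eightfold sumset), and the refinement (polynomial in $K$) is what produces the exponent $257$ in the final size bound on $\mathcal{A}$.
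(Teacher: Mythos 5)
Your setup is exactly the paper's: pass to the graph $\Gamma \subseteq \mathbb{Z}_2^n\times\mathbb{Z}_4^n$, note that additive quadruples of $\Gamma$ match those of $\phi$, run Balog--Szemer\'edi--Gowers to get $\Gamma_1$ with small difference set, apply Pl\"unnecke, and observe that the obstruction to $\phi|_{\mathcal{A}}$ being a Freiman $8$-homomorphism is precisely a nonzero element of $(8\Gamma_2 - 8\Gamma_2)\cap\ker\pi$ for the eventual $\Gamma_2$. Up to that point you are on track. But the actual extraction of $\Gamma_2$ --- the only genuinely hard step --- is left as a black box: ``an iterative or probabilistic refinement that at each stage discards elements of $\Gamma_1$ participating in some obstruction.'' That is not an argument, and naive iterated deletion has no reason to terminate after polynomially many steps, nor does it obviously avoid creating new obstructions as you go. This is where your proposal has a real gap.

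The step you are missing is the structural trick from Gowers's Corollary 7.6 (and it is what the paper actually does). First bound the \emph{obstruction fiber} $\mathcal{Y} = \settright{y\in\mathbb{Z}_4^n}{(0,y)\in 8\Gamma'-8\Gamma'}$: one shows $\card{\mathcal{Y}}\leq C$ with $C=O(\xi^{-128})$ by a counting argument using $\card{16\Gamma'-16\Gamma'}\leq C\card{\Gamma'}$ and the fact that $\Gamma'$ is a graph (so there are $\card{\Gamma'}$ elements of $8\Gamma'-8\Gamma'$ with distinct first coordinates, each of which can be paired with every $y\in\mathcal{Y}$). Because $\mathcal{Y}$ is small, one can find a \emph{subgroup} $\mathcal{W}\leq\mathbb{Z}_4^n$ of index at most $4^{t}$ with $t=\log C+1$ such that $\mathcal{Y}\cap\mathcal{W}=\set{0}$; the paper takes $\mathcal{W}$ to be the common kernel of $t$ random $\mathbb{Z}_4$-linear forms and uses a union bound. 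Then, by averaging, there is a shift $a$ such that restricting $\Gamma'$ to the fiber $\phi$-value in $a+\mathcal{W}$ keeps a $4^{-t}$ fraction; call this $\Gamma'_a$ and let $\mathcal{A}$ be its projection. Any failure of the Freiman $8$-condition on $\mathcal{A}$ produces $y\neq y'$ with $(x,y),(x,y')\in 4\Gamma'-4\Gamma'$, so $y-y'\in\mathcal{Y}$; but since all $\phi$-values on $\mathcal{A}$ lie in $a+\mathcal{W}$ and $\mathcal{W}$ is a subgroup, $y,y'\in\mathcal{W}$ and hence $y-y'\in\mathcal{W}\cap\mathcal{Y}=\set{0}$, a contradiction. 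The exponent $257$ comes out as $4\cdot 32\cdot 2 + 1$ from this accounting. Without the ``intersect with a coset of a subgroup that avoids the small obstruction set'' idea, the proof does not close, so you should make this step explicit rather than gesture at a refinement procedure.
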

\begin{proof}
  Let $\Gamma = \sett{(x,\phi(x))}{x\in\mathbb{Z}_2^n}$ be the graph of $\phi$, and think of it as a set in the Abelian group $\mathbb{Z}_2^n\times \mathbb{Z}_4^n$.
  Then $\Gamma$ contains at least $\xi\card{\mathbb{Z}_2^{n}}^3=\xi\card{\Gamma}^3$ solutions to $\gamma_1 + \gamma_2 = \gamma_3 + \gamma_4$, hence
  by Theorem~\ref{thm:BSG} we may find $\Gamma'\subseteq \Gamma$ such that $\card{\Gamma'}\geq \Omega(\xi\card{\Gamma})$ and
  $\card{\Gamma' - \Gamma'}\leq O(\xi^{-4} \card{\Gamma'})$. By Theorem~\ref{thm:plu} we get that
  $\card{16\Gamma' - 16\Gamma'}\leq O(\xi^{-32\cdot 4} \card{\Gamma'}) \leq C\cdot \card{\Gamma'}$ where $C = O(\xi^{-128})$.

  Let $\mathcal{Y} = \sett{y\in\mathbb{Z}_4^n}{(0,y)\in 8\Gamma' - 8\Gamma'}$; we claim that $\card{\mathcal{Y}}\leq C$ and towards contradiction
  we assume the contrary.
  First, note that we may choose $\card{\Gamma'}$ distinct values of $x$ such that $(x,w_x)\in 8\Gamma' - 8\Gamma'$ for some $w_x$.
  Indeed, we can fix any $15$ elements $(x_i,w_i)\in \Gamma'$ for $i=1,\ldots,15$, and range over all $\card{\Gamma'}$ pairs $(x,w_x)\in \Gamma'$
  to get $\card{\Gamma'}$ elements $(x+x'-x'', w_x+w'-w'')\in 8\Gamma' - 8\Gamma'$ where $x' = x_1+\ldots+x_7$, $x'' = x_8+\ldots+x_{15}$
  and $w' = w_1+\ldots+w_7$ and $w'' = w_8+\ldots+w_{15}$, which have distinct first coordinate. Thus, looking at the $\card{\Gamma'}$ elements
  $(x,w_x)\in 8\Gamma' - 8\Gamma'$ with distinct first coordinate, we get that $(x,w_x+y)\in 16\Gamma' - 16\Gamma'$ for all $x$ and $y\in\mathcal{Y}$,
  hence $\card{16\Gamma' - 16\Gamma'}> C \card{\Gamma'}$, in contradiction.
  The set $\mathcal{Y}$ will be useful for us as for any $x\in \mathbb{Z}_2^n$, we may define $\mathcal{Y}_x = \sett{y}{(x,y)\in 4\Gamma' - 4\Gamma'}$
  and get that $\mathcal{Y}_x - \mathcal{Y}_x \subseteq \mathcal{Y}$.

  Take $t = \log(C)+1$, choose $I_1,\ldots,I_t\subseteq [n]$ independently and uniformly and consider
  \[
  \mathcal{W} = \sett{y\in\mathbb{Z}_4^n}{\sum\limits_{j\in I_i} y_j = 0~\forall i=1,\ldots,t}.
  \]
  We note that the $0$ vector is always in $\mathcal{W}$, but any other $y\in\mathbb{Z}_4^n$ is in $\mathcal{W}$ with probability at most $2^{-t}$.
  Indeed, if $y$'s entries are all $\{0,2\}$-valued then $y$ can be in $\mathcal{W}$ only if $y/2$ satisfies $t$ randomly chosen equations modulo $2$,
  which happens with probability $2^{-t}$. If there are entries of $y$ that are either $1$ or $3$, then we get that $y\pmod{2}$ is a non-zero vector
  that must satisfy $t$ randomly chosen equations modulo $2$, which happens with probability $2^{-t}$. Thus,
  $\Expect{}{\card{\mathcal{Y}\cap \mathcal{W}\setminus\set{0}}}\leq 2^{-t} \card{\mathcal{Y}}<1$, so we may choose $\mathcal{W}$
  such that $\mathcal{Y}\cap \mathcal{W} = \set{0}$.

  For an $a\in\mathbb{Z}_4^n$ we define $\Gamma'_a = \sett{(x,y)\in\Gamma'}{y\in a+\mathcal{W}}$. We claim that there is a choice for $a$ such that
  (1) $\card{\Gamma_a'}\geq 4^{-t} \card{\Gamma'}\geq \Omega(\xi^{257} \card{\mathbb{Z}_2^n})$, and
  (2) taking $\mathcal{A} = \sett{x}{\exists y\text{ such that }(x,y)\in \Gamma_a'}$, the function $\phi|_{\mathcal{A}}$ is a Freiman homomorphism of
  order $8$.
  Together, this gives the statement of the lemma.

  For the first item we have
  \[
  \Expect{a}{\card{\Gamma_a'}} = \sum\limits_{(x,y)\in \Gamma'} \Prob{a}{y\in a+\mathcal{W}}
  = \sum\limits_{(x,y)\in \Gamma'} \Prob{a}{y-a\in \mathcal{W}}
  \geq \sum\limits_{(x,y)\in \Gamma'} 4^{-t}
  =4^{-t}\card{\Gamma'},
  \]
  so there is an $a$ such that $\card{\Gamma_a'}\geq 4^{-t}\card{\Gamma'}$, and we show that the second item holds for all $a$.

  Suppose towards contradiction that $\phi|_{\mathcal{A}}$ is not a Freiman homomorphism of order $8$.
  Thus we can find $x_1,\ldots,x_8\in \mathcal{A}$ and $x_1',\ldots,x_8'\in \mathcal{A}$ that have the same sum yet $\phi(x_1)+\ldots + \phi(x_8) \neq \phi(x_1')+\ldots + \phi(x_8')$. Denoting $x = x_1+\ldots+x_4 - x_5' - \ldots - x_8' = x_1'+\ldots+x_4' - x_5 - \ldots - x_8$,
  $y = \phi(x_1)+\ldots + \phi(x_4)-\phi(x_5')-\ldots-\phi(x_8')$ and $y' = \phi(x_1')+\ldots + \phi(x_4')-\phi(x_5)-\ldots-\phi(x_8)$
  so that $y\neq y'$, we get that
  $(x,y),(x,y')\in 4\Gamma_a' - 4\Gamma_a'\subseteq 4\Gamma' - 4\Gamma'$, so $y,y'\in \mathcal{Y}_x$. In particular, $y-y'\in \mathcal{Y}_x-\mathcal{Y}_x\subseteq \mathcal{Y}$.
  On the other hand, by choice of $\mathcal{A}$ we get that $\phi(x_i),\phi(x_i')\in a+\mathcal{W}$ for all $i$ and so $y,y'\in 4\mathcal{W}-4\mathcal{W} = \mathcal{W}$ and so
  $y-y'\in\mathcal{W}$.  It follows that $y-y'\in \mathcal{Y}\cap \mathcal{W}$,
  but by the choice of $\mathcal{W}$ this last intersection only contains the $0$ vector, and contradiction.
\end{proof}
Thus, combining Lemmas~\ref{lem:test_to_additive} and~\ref{lem:gowers} we are able to conclude that $F$ is a Freiman homomorphism of order $8$ when restricted
to a set $\mathcal{A}\subseteq \mathbb{Z}_2^n$ whose size is at least $\Omega(\eta^{1028} N)$. A Freiman homomorphism of order $8$ is also a Freiman
homomorphism of order $4$, and the following lemma shows this tells that there is a shift of $\{0,2\}^n$ in which $F(x)$ lies for many $x$'s:

\begin{lemma}\label{lem:frieman_prop}
  Let $\mathcal{A}\subseteq \mathbb{Z}_2^n$ and suppose that $\phi\colon \mathcal{A}\to\mathbb{Z}_4^n$ is a Freiman homomorphism of order $4$. Then
  there is $s\in \mathbb{Z}_4^n$ such that for all $x\in\mathcal{A}$, $\phi(x) \in s+\{0,2\}^n$.
\end{lemma}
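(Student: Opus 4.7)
The plan is to exploit the mismatch between the characteristic of the domain ($\mathbb{Z}_2^n$, in which every element has order dividing $2$) and that of the codomain ($\mathbb{Z}_4^n$). The point is that the Freiman hypothesis transports trivial identities of the form $u + u = 0$ in $\mathbb{Z}_2^n$ into non-trivial identities of the form $2\phi(\cdot) = 2\phi(\cdot)$ in $\mathbb{Z}_4^n$, and this will force $2\phi$ to be constant on $\mathcal{A}$.

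Concretely, for any $a,b,c,d \in \mathcal{A}$ we have the identity
\[
a + b + c + c = a + b + d + d
\]
in $\mathbb{Z}_2^n$, since both sides equal $a+b$. Applying the Freiman homomorphism property of order $4$ to this identity yields
\[
\phi(a) + \phi(b) + 2\phi(c) = \phi(a) + \phi(b) + 2\phi(d)
\]
in $\mathbb{Z}_4^n$, and cancelling gives $2\phi(c) = 2\phi(d)$. Thus the function $x \mapsto 2\phi(x)$ is constant on $\mathcal{A}$; call this common value $t \in \mathbb{Z}_4^n$.

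With this in hand, the conclusion is immediate. If $\mathcal{A}$ is empty the statement is vacuous, so fix any $x_0 \in \mathcal{A}$ and set $s = \phi(x_0)$. For every $x \in \mathcal{A}$,
\[
2(\phi(x) - s) = 2\phi(x) - 2\phi(x_0) = t - t = 0 \pmod{4},
\]
and an element of $\mathbb{Z}_4^n$ is killed by multiplication by $2$ if and only if each of its coordinates lies in $\{0,2\}$. Hence $\phi(x) - s \in \{0,2\}^n$, i.e.\ $\phi(x) \in s + \{0,2\}^n$, as required.

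There is no real obstacle here: the work was done by Lemma~\ref{lem:gowers} in upgrading ``many additive quadruples'' to ``Freiman homomorphism of order $8$'' (which is in particular of order $4$). The present lemma is purely a structural observation about Freiman homomorphisms $\mathbb{Z}_2^n \to \mathbb{Z}_4^n$, and the only subtlety is noticing the right trivial identity in $\mathbb{Z}_2^n$ to plug into the Freiman condition.
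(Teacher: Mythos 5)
Your proof is correct and is essentially the same as the paper's: both exploit the Freiman condition applied to two $4$-tuples in $\mathcal{A}$ with the same sum in $\mathbb{Z}_2^n$ to conclude that $2\phi$ is constant, and then observe that the $2$-torsion of $\mathbb{Z}_4^n$ is exactly $\{0,2\}^n$. The paper uses the tuples $(x,x,a,a)$ and $(a,a,a,a)$ while you use $(a,b,c,c)$ and $(a,b,d,d)$, a cosmetic difference.
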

\begin{proof}
  Choose any $a\in\mathcal{A}$ and let $s = \phi(a)$. Then for any $x\in\mathcal{A}$, applying the Freiman homomorphism condition on the tuples
  $(x,x,a,a)$ and $(a,a,a,a)$ that have the same sum over $\mathbb{Z}_2^n$, we get that $2\phi(x) + 2\phi(a) = 4\phi(a)=0$, so $2(\phi(x) - s) = 0$.
  This implies that $\phi(x) - s\in\{0,2\}^n$, and the proof is concluded.
\end{proof}

Combining the last two lemmas we get the following corollary.
\begin{corollary}\label{cor:final}
  Suppose that $F\colon \mathbb{Z}_2^n\to \mathbb{Z}_4^n$ is a function for which there are $G, H\colon \mathbb{Z}_2^n\to \mathbb{Z}_4^n$
  such that $\Prob{(x,y,z)\in S^n}{F(x) + G(y) + H(z) = 0}\geq \eta$. Then there is $s\in \mathbb{Z}_4^n$ such that
  \[
  \Prob{x\in \mathbb{Z}_2^n}{F(x) \in \{0,2\}^n + s}\geq \Omega(\eta^{1028}).
  \]
\end{corollary}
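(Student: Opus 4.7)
The plan is to chain together the three main lemmas of the section. First, I would feed the hypothesis of Corollary~\ref{cor:final} directly into Lemma~\ref{lem:test_to_additive} to conclude that $F$ has at least $\eta^4 N^3$ additive quadruples, where $N = 2^n$. Setting $\xi = \eta^4$, this is precisely the hypothesis needed for Gowers' lemma.

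Next, I would apply Lemma~\ref{lem:gowers} to $\phi = F$ with this value of $\xi$. This yields a subset $\mathcal{A} \subseteq \mathbb{Z}_2^n$ of size at least $\Omega(\xi^{257} N) = \Omega(\eta^{1028} N)$ on which $F|_\mathcal{A}$ is a Freiman homomorphism of order $8$. Since any Freiman homomorphism of order $8$ is automatically a Freiman homomorphism of order $4$ (just repeat entries or pad the $4$-term relation to an $8$-term one by duplicating both sides), the hypothesis of Lemma~\ref{lem:frieman_prop} is met.

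Then I would apply Lemma~\ref{lem:frieman_prop} to $F|_\mathcal{A}$ to obtain a shift $s \in \mathbb{Z}_4^n$ with $F(x) \in s + \{0,2\}^n$ for every $x \in \mathcal{A}$. Consequently,
\[
\Prob{x \in \mathbb{Z}_2^n}{F(x) \in s + \{0,2\}^n} \geq \frac{\card{\mathcal{A}}}{N} \geq \Omega(\eta^{1028}),
\]
which is exactly the conclusion of the corollary.

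There is essentially no substantive obstacle here: once Lemmas~\ref{lem:test_to_additive},~\ref{lem:gowers}, and~\ref{lem:frieman_prop} are in hand, the corollary is a direct composition, with the only minor bookkeeping point being the passage from order-$8$ to order-$4$ Freiman homomorphisms and the tracking of the exponent $4 \cdot 257 = 1028$ coming from the two reductions.
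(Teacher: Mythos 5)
Your proof is correct and follows exactly the paper's argument: chain Lemma~\ref{lem:test_to_additive} (to get $\eta^4 N^3$ additive quadruples), Lemma~\ref{lem:gowers} with $\xi=\eta^4$ (to get $\card{\mathcal{A}}\geq\Omega(\eta^{1028}N)$), and Lemma~\ref{lem:frieman_prop} for the shift $s$. One small caveat on the order-$8\Rightarrow$order-$4$ step: the right reduction pads a $4$-term relation with the \emph{same} four elements of $\mathcal{A}$ on both sides; literally doubling each side's entries would only give $2\sum_i\phi(a_i)=2\sum_i\phi(b_i)$ in $\mathbb{Z}_4^n$, which is not enough since $\mathbb{Z}_4$ has $2$-torsion.
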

\begin{proof}
  By Lemma~\ref{lem:test_to_additive} we get that $F$ has at least $\eta^4 N^3$ additive quadruples, so by Lemma~\ref{lem:gowers} there
  is $\mathcal{A}\subseteq \mathbb{Z}_2^n$ of size at least $\Omega(\eta^{1028} N)$ such that $F|_{\mathcal{A}}$ is a Freiman homomorphism.
  Applying Lemma~\ref{lem:frieman_prop} we conclude that there is $s\in\mathbb{Z}_4^n$ such that $F(x)\in s+\{0,2\}^n$ for all $x\in\mathcal{A}$
  and the proof is concluded.
\end{proof}

\subsection{Concluding Theorem~\ref{thm:main}}
Let $f,g,h\colon\{0,1\}^n\to\{0,1\}^n$ be strategies that achieve value at least $\eta$ in ${\sf GHZ}^{\otimes n}$,
and define $F\colon \mathbb{Z}_2^n\to\mathbb{Z}_4^n$ by $F(x) = 2f(x) - x$ and similarly $G(y) = 2g(y) - y$ and $H(z) = 2h(z) - z$.
By Lemma~\ref{lem:reduce_to_addive} we get that $\Prob{(x,y,z)\in S^n}{F(x) + G(y) + H(z) = 0}\geq \eta$, hence by Corollary~\ref{cor:final}
there is $s\in\mathbb{Z}_4^n$ such that $\Prob{x\in \mathbb{Z}_2^n}{F(x) \in s+\{0,2\}^n}\geq \eta'$ for $\eta' = \Omega(\eta^{1028})$.
For any such $x$, we get that $2f(x) - x = F(x) = s + L(x)$ where $L(x)\in \{0,2\}^n$, and so $x = -s + 2f(x)-L(x)$. Note that
this is equality modulo $4$ hence it implies it also holds modulo $2$. We also have that $2f(x) - L(x)\in \{0,2\}^n$ so this vanishes
modulo $2$, hence we get that $x = -s\pmod{2}$. In other words, there can be at most single $x$ such that $F(x) \in s+\{0,2\}^n$
and so $\Prob{x\in \mathbb{Z}_2^n}{F(x) \in s+\{0,2\}^n}\leq 2^{-n}$. Combining, we get that $\eta'\leq 2^{-n}$ and so $\eta\leq 2^{-n/1028 + O(1)}$.
\bibliography{ref}

\begin{thebibliography}{10}

\bibitem{BKM22}
Amey Bhangale, Subhash Khot, and Dor Minzer.
\newblock On approximability of satisfiable \emph{k}-csps: {I}.
\newblock In {\em {STOC} '22: 54th Annual {ACM} {SIGACT} Symposium on Theory of
  Computing, Rome, Italy, June 20 - 24, 2022}, pages 976--988, 2022.

\bibitem{BravermanGarg}
Mark Braverman and Ankit Garg.
\newblock Small value parallel repetition for general games.
\newblock In {\em Proceedings of the Forty-Seventh Annual {ACM} on Symposium on
  Theory of Computing, {STOC} 2015, Portland, OR, USA, June 14-17, 2015}, pages
  335--340. {ACM}, 2015.

\bibitem{DHVY}
Irit Dinur, Prahladh Harsha, Rakesh Venkat, and Henry Yuen.
\newblock Multiplayer parallel repetition for expanding games.
\newblock In {\em 8th Innovations in Theoretical Computer Science Conference,
  {ITCS} 2017, January 9-11, 2017, Berkeley, CA, {USA}}, volume~67 of {\em
  LIPIcs}, pages 37:1--37:16, 2017.

\bibitem{DinurSteurer}
Irit Dinur and David Steurer.
\newblock Analytical approach to parallel repetition.
\newblock In {\em Symposium on Theory of Computing, {STOC} 2014, New York, NY,
  USA, May 31 - June 03, 2014}, pages 624--633. {ACM}, 2014.

\bibitem{GHMRZ}
Uma Girish, Justin Holmgren, Kunal Mittal, Ran Raz, and Wei Zhan.
\newblock Parallel repetition for the {GHZ} game: {A} simpler proof.
\newblock In {\em Approximation, Randomization, and Combinatorial Optimization.
  Algorithms and Techniques, {APPROX/RANDOM} 2021, August 16-18, 2021,
  University of Washington, Seattle, Washington, {USA} (Virtual Conference)},
  volume 207 of {\em LIPIcs}, pages 62:1--62:19, 2021.

\bibitem{GHMRZ2}
Uma Girish, Justin Holmgren, Kunal Mittal, Ran Raz, and Wei Zhan.
\newblock Parallel repetition for all 3-player games over binary alphabet.
\newblock In {\em {STOC} '22: 54th Annual {ACM} {SIGACT} Symposium on Theory of
  Computing, Rome, Italy, June 20 - 24, 2022}, pages 998--1009. {ACM}, 2022.

\bibitem{GMRZ}
Uma Girish, Kunal Mittal, Ran Raz, and Wei Zhan.
\newblock Polynomial bounds on parallel repetition for all 3-player games with
  binary inputs.
\newblock In {\em Approximation, Randomization, and Combinatorial Optimization.
  Algorithms and Techniques, {APPROX/RANDOM} 2022, September 19-21, 2022,
  University of Illinois, Urbana-Champaign, {USA} (Virtual Conference)}, volume
  245 of {\em LIPIcs}, pages 6:1--6:17, 2022.

\bibitem{Gowers}
William~T Gowers.
\newblock A new proof of {S}zemer{\'e}di's theorem.
\newblock {\em Geometric \& Functional Analysis GAFA}, 11(3):465--588, 2001.

\bibitem{Holenstein}
Thomas Holenstein.
\newblock Parallel repetition: Simplification and the no-signaling case.
\newblock {\em Theory Comput.}, 5(1):141--172, 2009.

\bibitem{HR}
Justin Holmgren and Ran Raz.
\newblock A parallel repetition theorem for the {GHZ} game.
\newblock {\em CoRR}, abs/2008.05059, 2020.

\bibitem{petridis2012new}
Giorgis Petridis.
\newblock New proofs of {P}l{\"u}nnecke-type estimates for product sets in
  groups.
\newblock {\em Combinatorica}, 32(6):721--733, 2012.

\bibitem{Plu}
Helmut Pl{\"u}nnecke.
\newblock Eine zahlentheoretische anwendung der graphentheorie.
\newblock 1970.

\bibitem{Rao}
Anup Rao.
\newblock Parallel repetition in projection games and a concentration bound.
\newblock {\em {SIAM} J. Comput.}, 40(6):1871--1891, 2011.

\bibitem{Raz}
Ran Raz.
\newblock A parallel repetition theorem.
\newblock {\em {SIAM} J. Comput.}, 27(3):763--803, 1998.

\bibitem{RuszaPlu}
Imre~Z Ruzsa.
\newblock An application of graph theory to additive number theory.
\newblock {\em Scientia, Ser. A}, 3(97-109):9, 1989.

\bibitem{Samorodnitsky}
Alex Samorodnitsky.
\newblock Low-degree tests at large distances.
\newblock In {\em Proceedings of the thirty-ninth annual ACM symposium on
  Theory of computing}, pages 506--515, 2007.

\bibitem{Schoen}
Tomasz Schoen.
\newblock New bounds in {B}alog-{S}zemer{\'e}di-{G}owers theorem.
\newblock {\em Combinatorica}, 35(6):695--701, 2015.

\bibitem{Verbitsky}
Oleg Verbitsky.
\newblock Towards the parallel repetition conjecture.
\newblock {\em Theoretical Computer Science}, 157(2):277--282, 1996.

\end{thebibliography}
\bibliographystyle{plain}
\end{document}